\begin{document}

\title{Fusing Narrative Semantics for Financial Volatility Forecasting}

\author{Yaxuan Kong}
\authornote{Both authors contributed equally to this research.}
\affiliation{%
  \institution{University of Oxford}
  \country{United Kingdom}
}
\email{yaxuan.kong@eng.ox.ac.uk}

\author{Yoontae Hwang}
\authornotemark[1]
\affiliation{%
  \institution{Pusan National University}
  \country{Republic of Korea}
}
\email{yoontae.hwang@pusan.ac.kr}

\author{Marcus Kaiser}
\affiliation{%
  \institution{Deutsche Bank AG}
  \city{London}
  \country{United Kingdom}}
\email{marcus.kaiser@db.com}

\author{Chris Vryonides}
\affiliation{%
  \institution{Deutsche Bank AG}
  \city{London}
  \country{United Kingdom}}
\email{chris.vryonides@db.com}

\author{Roel Oomen}
\affiliation{%
  \institution{Deutsche Bank AG}
  \city{London}
  \country{United Kingdom}}
\email{roel.oomen@db.com}

\author{Stefan Zohren}
\affiliation{%
  \institution{University of Oxford}
  \country{United Kingdom}
}
\email{stefan.zohren@eng.ox.ac.uk}


\begin{abstract}
  We introduce \texttt{M2VN}: Multi-Modal Volatility Network, a novel deep learning-based framework for financial volatility forecasting that unifies time series features with unstructured news data. \texttt{M2VN} leverages the representational power of deep neural networks to address two key challenges in this domain: (i) aligning and fusing heterogeneous data modalities, numerical financial data and textual information, and (ii) mitigating look-ahead bias that can undermine the validity of financial models. To achieve this, \texttt{M2VN} combines open-source market features with news embeddings generated by Time Machine GPT, a recently introduced point-in-time LLM, ensuring temporal integrity. An auxiliary alignment loss is introduced to enhance the integration of structured and unstructured data within the deep learning architecture. Extensive experiments demonstrate that \texttt{M2VN} consistently outperforms existing baselines, underscoring its practical value for risk management and financial decision-making in dynamic markets.
\end{abstract}

\begin{CCSXML}
<ccs2012>
<concept>
<concept_id>10010147.10010178</concept_id>
<concept_desc>Computing methodologies~Artificial intelligence</concept_desc>
<concept_significance>500</concept_significance>
</concept>
</ccs2012>
\end{CCSXML}

\ccsdesc[500]{Computing methodologies~Artificial intelligence}

\keywords{Multimodal fusion, Temporal alignment, Look-ahead bias mitigation, Heterogeneous data integration, Representation learning}

\maketitle

\section*{Disclaimer}
The opinions expressed in this article are those of the authors alone and do not necessarily represent the views of Deutsche Bank AG. This article is not intended to be comprehensive, nor does it constitute financial or other advice.

\section{Introduction}
Forecasting financial market volatility is a critical task in quantitative finance, with broad applications in risk management, derivative pricing, and portfolio optimization \citep{andersen2006volatility, alexander2008market}. Traditionally, volatility forecasting methods have relied primarily on features extracted directly from time series data, such as historical prices, returns, and various statistical measures \citep{clements2024harvesting}. However, recent advances in natural language processing (NLP) and the emergence of large language models (LLMs) have opened new possibilities for incorporating external textual information such as news articles into predictive models \citep{kong2024large, kong2024large_2}. While text-enhanced time series forecasting (TSF) has been explored in related domains \citep{liu2024time, jin2023time, gruver2023large}, research that specifically leverages LLM-derived textual features for volatility forecasting remains limited. 

Most existing studies incorporate textual data through word embeddings and use linear or NLP models for realized volatility (RV) forecasting \citep{rahimikia2021realised, parvini2025textual, atkins2018financial}. In contrast, the integration of rich, context-aware information extracted by LLMs with deep learning-based time series models in a unified framework is still largely underexplored. Deep learning models, with their ability to capture complex, nonlinear relationships and to integrate heterogeneous data sources, are particularly well-suited for modeling the intricate dynamics between financial time series and textual information. Nevertheless, progress in this area has been relatively limited to date.

\begin{figure}[!t]
\vspace{10pt}
  \centering
  \includegraphics[width=\linewidth]{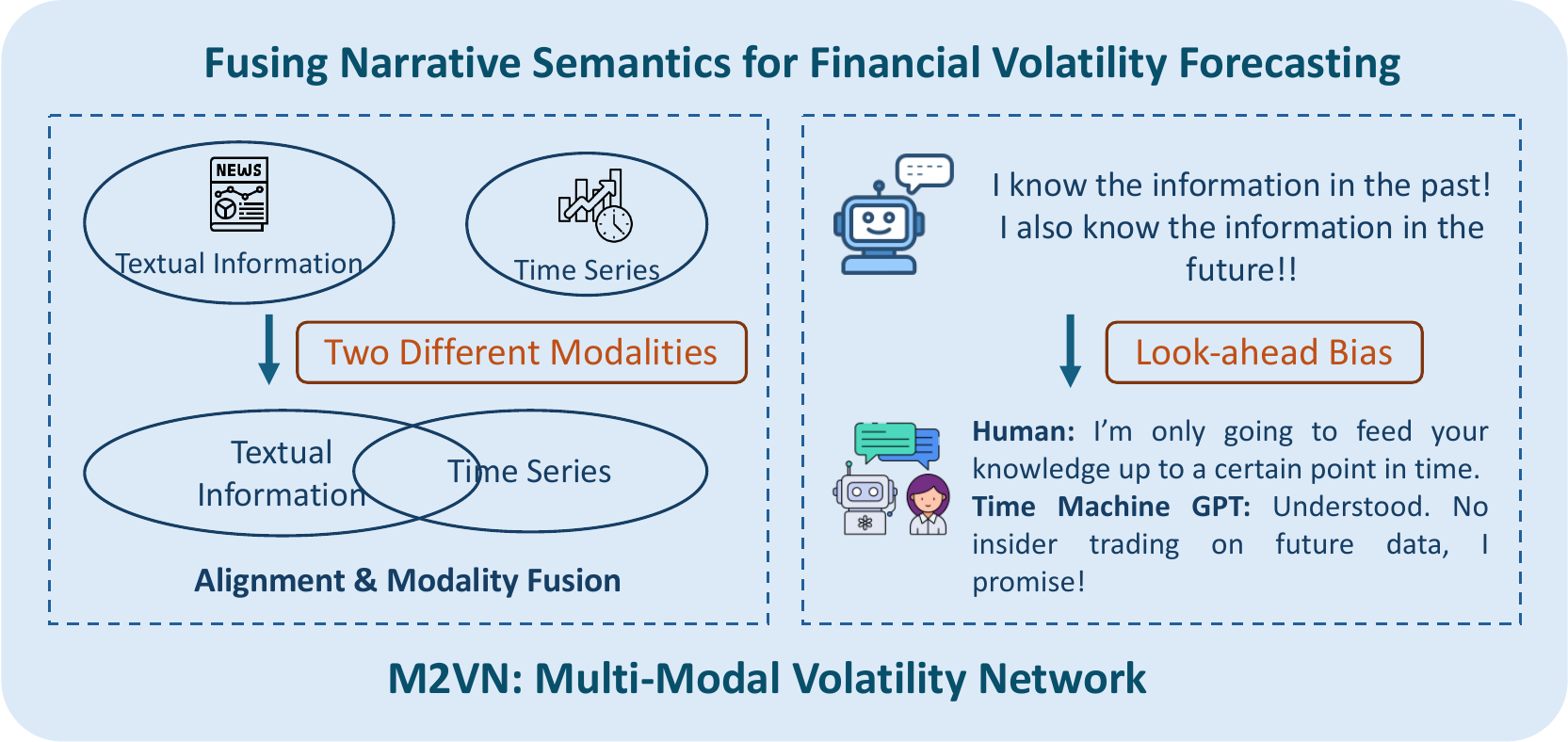}
  \Description{Illustration showing two challenges: data fusion across modalities and mitigation of look-ahead bias.}
  \caption{Illustration of two challenges: data fusion across modalities and mitigation of look-ahead bias.}
  \label{placeholder}
\vspace{-12pt}
\end{figure}

This limited exploration can largely be attributed to several inherent challenges unique to the intersection of financial time series and textual data (as illustrated in Figure 1). \textit{Firstly}, financial news and time series data represent distinct modalities, making their alignment and integration non-trivial. Specifically, learning how to effectively embed and fuse the intersection of textual information and numerical time series data into a unified embedding space is a significant challenge. \textit{Secondly}, utilizing news information with LLMs for financial prediction tasks often introduces look-ahead bias, whereby models inadvertently incorporate future knowledge, compromising their applicability and reliability. 

To address these challenges, we propose \textbf{M2VN: Multi-Modal Volatility Network}. \texttt{M2VN} innovatively combines traditional time series features - derived from open-sourced end-of-day price information (open, high, low, close, and volume) - with news data, where news content is embedded using Time Machine GPT (TiMaGPT) \citep{drinkall2024time}, a series of point-in-time language models specifically designed to mitigate look-ahead bias. To effectively align these two modalities, we introduce an auxiliary alignment loss function, which significantly enhances the model’s ability to coherently fuse textual and time series information within a unified framework. Through this design, \texttt{M2VN} can leverage both structured financial data and unstructured textual data to improve volatility forecasting performance. We validate the effectiveness of \texttt{M2VN} through extensive experiments, demonstrating its superior performance compared to existing baselines.

In summary, our paper makes the following key contributions:
\begin{enumerate}
    \item We propose \texttt{M2VN}: Multi-Modal Volatility Network, a novel framework to align and fuse news embeddings with time series data for the task of financial volatility forecasting.
    \item We assess the incremental value of incorporating rich news articles for RV forecasting, providing quantitative evidence for the benefits of augmenting time series models with contextual textual information.
    \item To the best of our knowledge, this is the first work to apply TiMaGPT \citep{drinkall2024time} for mitigating look-ahead bias in financial volatility forecasting with LLMs.
\end{enumerate}

\section{Related Works}
We review related works on financial volatility forecasting and text-enhanced TSF.
\subsection{Financial Volatility Forecasting} 
This section first reviews the main models used in volatility forecasting, then discusses recent advances in incorporating exogenous variables to enhance forecast accuracy.

\textbf{Typical Volatility Forecasting Models.}
Traditional volatility forecasting has been dominated GARCH \citep{engle1986modelling, andersen1998answering} models, which rely on daily squared returns to capture the dynamics of financial time series. While straightforward, these models are often hampered by the noisiness of daily return-based volatility proxies. The emergence of RV models, built on high-frequency intraday data, significantly improved predictive accuracy, especially with the introduction of the Heterogeneous Auto-Regressive (HAR) model \citep{corsi2009simple, degiannakis2017forecasting, audrino2020impact}. The HAR framework models current volatility using lagged daily, weekly, and monthly volatility, capturing volatility persistence in a simple linear structure \citep{clements2024harvesting}. Despite the advantages of high-frequency data, its limited accessibility and higher cost have led researchers to explore range-based estimators — such as Garman-Klass \citep{garman1980estimation}, Parkinson \citep{parkinson1980extreme}, and Rogers-Satchell \citep{rogers1991estimating} — that utilize daily open, high, low, and close prices. Recent empirical work \citep{lyocsa2021stock} demonstrates that, while high-frequency-based models outperform low-frequency ones for very short-term forecasts, the difference in forecast accuracy becomes negligible for longer horizons (e.g., one month), making low-frequency approaches both practical and effective in real-world settings.

\textbf{Exogenous Variables in Volatility Forecasting.}
To further enhance volatility forecasting performance, recent studies have introduced exogenous variables into HAR models, leading to the HAR-X framework \citep{degiannakis2017forecasting, clements2024harvesting}. These exogenous predictors typically include implied volatility indices, economic policy and equity market uncertainty indices, geopolitical risk measures, macroeconomic indicators, and asset-specific momentum. By introducing external information that is not captured by historical volatility alone, these additional features — many of which are available at daily frequencies — have been shown to enhance out-of-sample forecast accuracy \citep{korkusuz2024beyond}. Remarkably, several studies find that low-frequency HAR-X models can, in some cases, rival the forecasting performance of traditional high-frequency approaches \citep{clements2024harvesting}. Nevertheless, the literature \citep{clements2021practical, christensen2023machine} remains primarily focused on structured numerical data or linear models, with relatively few efforts devoted to combining unstructured text sources with these related time series features \citep{parvini2025textual, rahimikia2021realised}. The question of how to effectively align and fuse these disparate modalities remains largely unexplored, motivating the need for systematic research into multi-modal volatility forecasting.

\subsection{Text-Enhanced Time Series Forecasting}
This section reviews recent advances in integrating textual information into TSF, with a focus on applications of LLMs and the challenge of look-ahead bias in financial contexts.

\textbf{LLMs for TSF.}  
Effectively integrating external textual information has been shown to enhance TSF, for instance by correlating financial news sentiment with market movements \citep{wan2021sentiment}. More recently, research has shown that LLMs can significantly advance this approach. For instance, \citep{zhou2023one, hwang2025decision} demonstrated the versatility of LLMs across a range of time series tasks by employing a GPT-2 backbone capable of capturing temporal dependencies. Similarly, \citep{gruver2023large, ansari2024chronos} highlighted the zero-shot forecasting capabilities of pretrained LLMs, showing that appropriate tokenization allows these models to implicitly learn temporal patterns without the need for explicit task-specific training. Furthermore, \citep{jin2023time} proposed a reprogramming approach that converts time series data into formats more interpretable by LLMs, achieving state-of-the-art forecasting results. Their patch reprogramming method aligns time series patch embeddings with natural language representations, enhancing the model’s ability to understand and reason about time series data. Collectively, these studies underscore the promise of LLMs in advancing the field of TSF.

\textbf{Look-Ahead Bias in Financial Forecasting.} 
Despite their success in general domains, the application of LLM-based methods to financial time series is complicated by the risk of look-ahead bias. Look-ahead bias arises when models inadvertently access future information during training or evaluation, leading to unrealistically optimistic results and compromising the reliability of real-world forecasts \citep{sarkar2024lookahead}. To address this challenge, \citep{kim2024financial} propose using anonymized data to prevent LLMs from exploiting memorized future information; however, this approach may restrict the richness and specificity of textual data available to the model, potentially limiting its forecasting performance. To more robustly mitigate look-ahead bias while preserving the utility of external information, \citep{drinkall2024time} introduced TiMaGPT, a series of point-in-time LLMs specifically trained to maintain strict temporal integrity. By ensuring that models remain uninformed about future events or language changes during training, TiMaGPT provides a robust foundation for reliable TSF in dynamic financial contexts.

\section{Methodology}\label{method}
We propose \texttt{M2VN}: Multi-Modal Volatility Network, which is graphically illustrated in Figure 2. 

\begin{figure*}[!t]
\vspace{-5pt}
  \centering
  \includegraphics[width=0.92\linewidth]{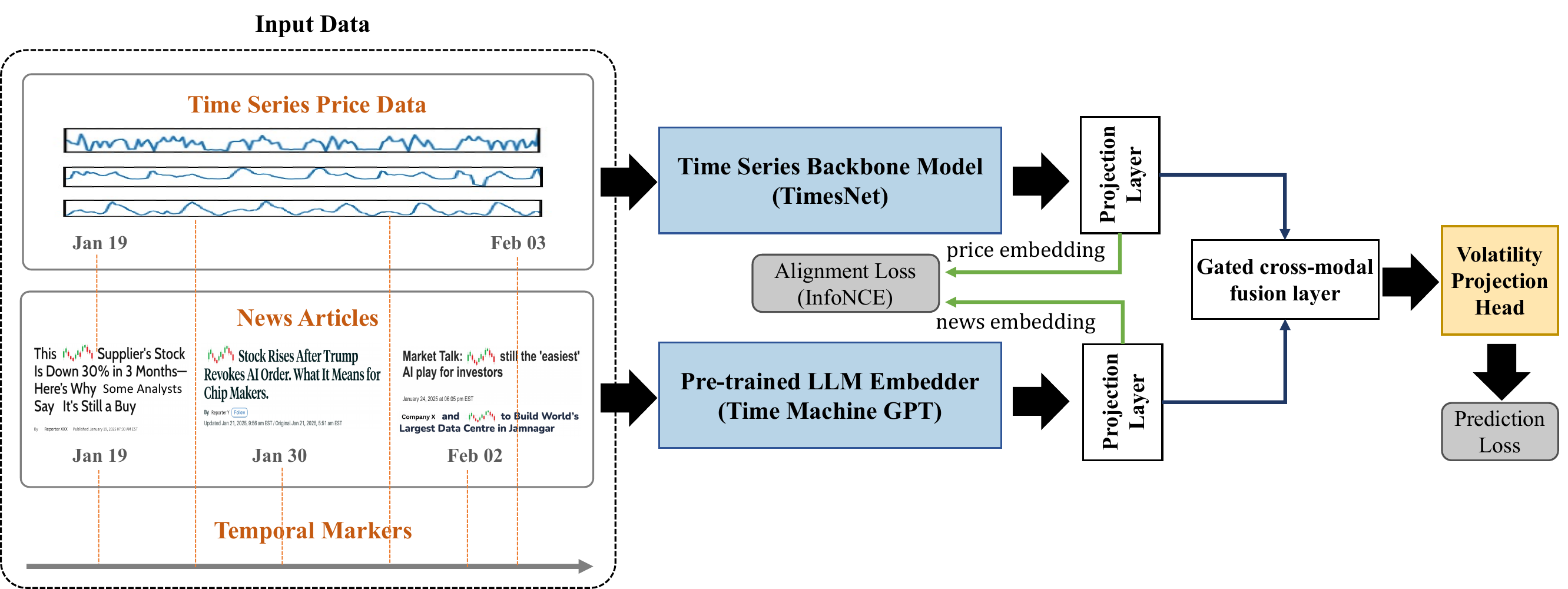}
  \caption{The architecture of M2VN, a multi-modal volatility network for financial volatility forecasting.}
  \label{placeholder}
\vspace{-8pt}
\end{figure*}

\subsection{Problem Formulation}
This subsection presents the formal problem formulation for multi-modal volatility forecasting and establishes the mathematical notation employed throughout our methodological exposition. We frame the task as a supervised, sequence-to-one learning problem, where the objective is to predict a future financial volatility measure by leveraging a historical look-back window of heterogeneous data streams. Our temporal domain consists of discrete trading days, indexed by $t=1, \dots, \mathcal{T}$. The core task is, at the close of day $t$, to predict the realized volatility for a future day $t+H$, where $H \ge 1$ is the prediction horizon. To achieve this, our model is provided with data spanning a look-back window of $T$ preceding days, covering the interval $[t-T+1, t]$. The input at each time $t$ is a multi-modal tuple $\mathbf{Z}_t = (X_t, N_t, M_t) \in \mathcal{X}$, where $\mathcal{X}$ is the input space. Each component of this tuple represents a distinct data modality.

First, $X_t = (\mathbf{x}_{t-T+1}, \dots, \mathbf{x}_t) \in \mathbb{R}^{T \times d_x}$ represents the \textbf{quantitative market state}. Each feature vector $\mathbf{x}_{\tau} \in \mathbb{R}^{d_x}$ is a concatenation of indicators designed to capture different facets of market dynamics, including historical volatility, price trends, and aggregate news information. Its composition is formally expressed as:

\begin{equation}
\small
\begin{split}
\mathbf{x}_{\tau} =\;[ &
  \underbrace{RV_{\tau}^{(d)}, RV_{\tau}^{(w)}, RV_{\tau}^{(m)}}_{\text{Realized Volatility}},\;
  \underbrace{M_{\tau}^{(w)}, M_{\tau}^{(m)}, M_{\tau}^{(q)}}_{\text{Momentum}},\\
  &\quad
  \underbrace{V_{\tau}^{(d)}}_{\text{Volume}},\;
  \underbrace{\text{VIX}_{\tau}^{(d)}}_{\text{VIX}},\;
  \underbrace{C_{\tau}^{(d)}}_{\text{News Count}}
]^{\top}
\end{split}
\end{equation}

Here, $RV_{\tau}^{(d)}$, $RV_{\tau}^{(w)}$, and $RV_{\tau}^{(m)}$ denote the daily, weekly (5-day average), and monthly (22-day average) RVs observed up to day $\tau$. The terms $M_{\tau}^{(w)}$, $M_{\tau}^{(m)}$, and $M_{\tau}^{(q)}$ are weekly, monthly, and quarterly momentum indicators. The trading volume is $V_{\tau}^{(d)}$. Finally, $\text{VIX}_{\tau}^{(d)}$ is the value of the VIX index, and $C_{\tau}^{(d)}$ is the count of relevant news articles for day $\tau$. 

Second, $N_t = (\mathbf{n}_{t-T+1}, \dots, \mathbf{n}_t) \in \mathbb{R}^{T \times d_n}$ constitutes the \textbf{qualitative information state}. Each vector $\mathbf{n}_{\tau} \in \mathbb{R}^{d_n}$ is a dense embedding derived from the titles and bodies of news articles published on day $\tau$, generated by a pre-trained transformer-based language model. The dimensionality $d_n$ corresponds to the output dimension of this language model. Finally, $M_t = (\mathbf{m}_{t-T+1}, \dots, \mathbf{m}_t) \in \mathbb{R}^{T \times d_m}$ contains \textbf{temporal markers}, where each $\mathbf{m}_{\tau} \in \mathbb{R}^{d_m}$ encodes calendar-related information (e.g., day of the week, month of the year) that captures known market seasonality. The prediction target, or ground truth, is a robust volatility estimator for day $t+H$, denoted as $y_{t+H} \in \mathcal{Y} \subseteq \mathbb{R}_{\ge 0}$. Following established financial literature, we define this target as an aggregation of multiple volatility estimators to yield a more stable and comprehensive measure. Specifically, let $V_{\tau}^{\text{P}}$, $V_{\tau}^{\text{GK}}$, and $V_{\tau}^{\text{RS}}$ represent the Parkinson \citep{parkinson1980extreme}, Garman-Klass \citep{garman1980estimation}, and Rogers-Satchell \citep{rogers1991estimating} volatility estimators for day $\tau$, respectively. The target variable is then constructed as $y_{t+H} = f_{\text{agg}}(V_{t+H}^{\text{P}}, V_{t+H}^{\text{GK}}, V_{t+H}^{\text{RS}})$, where $f_{\text{agg}}(\cdot)$ is a weighted averaging function \citep{clements2024harvesting}.

Our model is a learnable function $\Phi$ with parameters $\theta$, which maps an input instance $\mathbf{Z}_t$ from the input space $\mathcal{X}$ to a volatility prediction $\hat{y}_{t+H} \in \mathcal{Y}$. The learning process is governed by a dual-objective function, designed to not only ensure predictive accuracy but also to foster a meaningful semantic alignment between the quantitative and qualitative data streams. The primary objective is the \textbf{prediction loss}, which we define using the standard Mean Squared Error (MSE) to penalize deviations between the predicted and actual volatility:
\begin{equation}
\mathcal{L}_{\text{pred}}(\theta) = \mathbb{E}_{(\mathbf{Z}_t, y_{t+H})} [ ( \Phi(\mathbf{Z}_t; \theta) - y_{t+H} )^2 ]    
\end{equation}
To enhance the model's ability to fuse information, we introduce an auxiliary \textbf{alignment loss}, $\mathcal{L}_{\text{align}}$, based on the principles of contrastive learning. This objective encourages the intermediate representations of the market state and the news state for the same timestep within the look-back window to be similar, while being dissimilar to representations from other timesteps (cf.\citep{liu2024time}). Let $\mathbf{h}_{\tau}^{\text{quant}}$ and $\mathbf{h}_{\tau}^{\text{qual}}$ be the latent representations for the quantitative and qualitative modalities at timestep $\tau \in [t-T+1, t]$, respectively, derived from intermediate layers of $\Phi$. The alignment loss is formulated using an InfoNCE objective, which minimizes the cross-entropy between aligned pairs over a set of negative samples. The final optimization problem is thus to find the parameters $\theta^\star$ that minimize a weighted sum of these two objectives:
\begin{equation}
\small
\theta^\star = \arg\min_{\theta} \mathcal{L}(\theta) \quad \text{where} \quad \mathcal{L}(\theta) = \mathcal{L}_{\text{pred}}(\theta) + \lambda \mathcal{L}_{\text{align}}(\theta)
\end{equation}
Here, $\lambda > 0$ is a scalar hyperparameter that controls the relative importance of the alignment task. This multi-task learning paradigm is crucial for guiding the model to discover the complex, non-linear interactions between market dynamics and the narrative content of financial news.

The notations for the input data ($X_t, N_t, M_t$), the model function ($\Phi$), its parameters ($\theta$), and the dual-objective function ($\mathcal{L}_{\text{pred}}, \mathcal{L}_{\text{align}}$) will be used consistently in the following sections. We will now proceed to detail the specific architectural components of our proposed model, $\Phi$, explaining how its design is tailored to effectively address the learning problem defined herein.

\subsection{\texttt{M2VN}: Multi-Modal Volatility Network}

Forecasting the aggregated RV $y_{t+H}$ from the heterogeneous history $\mathbf Z_{t}=(X_{t},N_{t},M_{t})$ requires a representation that is simultaneously faithful to the statistical regularities of price trajectories, sensitive to the semantic content of financial narratives, and aware of calendar-driven seasonality.  The proposed architecture therefore processes the three modalities in a tightly coupled pipeline whose transformations preserve temporal coherence and promote cross-modal alignment, ultimately producing a single forecast $\hat y_{t+H}\in\mathcal Y$.  Formally, let $X_{t}\in\mathbb R^{T\times d_x}$, $N_{t}\in\mathbb R^{T\times d_n}$ and $M_{t}\in\mathbb R^{T\times d_m}$ denote the input tensors for a given prediction time $t$.  The network is parametrized by $\theta$ and is organized into exactly three interacting components that operate sequentially on these inputs.

\textbf{Multi-Modal Feature Encoder.}  The first component transforms each modality into a common latent space of width $d$ while retaining the full temporal resolution $T$.  Concretely, a learnable map $\mathcal E_{\text{price}}:\mathbb R^{T\times d_x}\to\mathbb R^{T\times d}$ embeds quantitative signals, using affine normalization followed by a periodic-spectral operator that decomposes the trajectory into its dominant frequency channels.  In parallel, a projection $\mathcal E_{\text{news}}^{\text{LLM}}:\mathbb R^{T\times d_n}\to\mathbb R^{T\times d}$ distills the transformer-derived news embeddings; its weights are shared across dates so that semantic drift, rather than scale variance, drives the dynamics. Finally, a harmonic embedding $\mathcal E_{\text{time}}:\mathbb R^{T\times d_m}\to\mathbb R^{T\times d}$ encodes calendar features \citep{zhou2021informer, wu2021autoformer}. The encoder outputs three sequences
\begin{equation}
\small
\mathbf h^{\text{price}}_{1:T}=\mathcal E_{\text{price}}(X_t) \quad
\mathbf h^{\text{news}}_{1:T}=\mathcal E_{\text{news}}^{\text{LLM}}(N_t) \quad
\mathbf h^{\text{time}}_{1:T}=\mathcal E_{\text{time}}(M_t)    
\end{equation}
and each belonging to $(\mathbb R^{d})^{T}$.  For brevity, we denote their joint concatenation as $\mathbf h_{1:T}\in\mathbb R^{T\times 3d}$. This tensor constitutes the raw material on which temporal reasoning is performed.

Note that employing a modern language model such as GPT-4.1 to embed the news stream $N_t$ can spuriously inflate forecasting accuracy, because the model’s training corpus may already contain future articles or hindsight knowledge. To neutralize this look-ahead bias, we adopt TiMaGPT \citep{drinkall2024time}. For every date $\tau$, the embedding $\mathbf n_\tau$ is generated with TiMaGPT weights whose knowledge cut-off has been rolled back to exactly one year before $\tau$. In practice, we advance the cutoff year-by-year during training, freeze the corresponding parameters, and feed the day’s news content — obtained by concatenating all articles published on $\tau$ — through this frozen model to obtain a temporally faithful vector representation.

\textbf{Latent Dynamics Module.} The second component models long-range dependencies and cross-modal interactions over the horizon $T+H$ without resorting to causal truncation \citep{wu2023timesnet}.  Let $\mathbf h^{(0)}_{1:T}=\mathbf h_{1:T}$.  A stack of $L$ identical blocks iteratively refines these embeddings, each block $\mathcal T_\ell:(\mathbb R^{T\times 3d})\to(\mathbb R^{T\times 3d})$ being defined as
\begin{equation}
\mathcal T_\ell(\mathbf h) = \mathbf h + \mathcal G_\ell(\mathcal C_\ell(\mathbf h)),
\end{equation}
where $\mathcal C_\ell$ is a spectral–convolution operator that exploits the discrete Fourier transform to isolate the $k$ most energetic periods of the price trajectory before applying an inception-style two-dimensional convolution across the period–channel grid. 

A key difficulty in volatility forecasting is teasing apart informative, low-frequency rhythms (e.g. weekly or monthly cycles) from high-frequency stochastic noise. To address this, proceeds in two steps (See \citep{wu2023timesnet}):

\begin{itemize}[left=6pt]
    \item \textbf{Spectral filtering.} Transform the latent trajectory to the Fourier domain and keep the modes with the highest energy, isolating the most structurally significant components.
    \item \textbf{Inception-style 2D convolution.} Apply multi-scale kernels to the filtered signal to capture residual cross-temporal interactions.
\end{itemize}

The choice to select the top energy modes in the backbone is a principled decision, not a heuristic. As proven in Proposition~\ref{prop:best-k-term}, this yields the best k-term approximation that minimizes the Frobenius reconstruction error. This approach is a well-established principle in fields such as Fourier analysis, nonlinear approximation, and compressed sensing, where it has been shown that selecting the k largest transform coefficients provides an optimal approximation \citep{cohen2006compressed, davenport2012introduction}. Consequently, this method imposes a strong inductive bias that encourages the network to focus on robust, periodic structures rather than overfitting to incidental noise. This built-in denoising mechanism can materially improve generalization and stability.

It is instructive to provide the following result, the proof of which follows from standard results \citep{cohen2006compressed, davenport2012introduction}:


\begin{proposition}[Best $k$-term Fourier approximation]\label{prop:best-k-term}
Let $\mathbf h=(h_0,\dots,h_{T-1})\in(\mathbb C^{d})^{T}$ and denote by
$\widehat{\mathbf h}=\mathcal F\mathbf h$ its unitary discrete
Fourier transform
\begin{equation}
  \widehat{\mathbf h}_{\omega}
  =
  \frac{1}{\sqrt{T}}
  \sum_{t=0}^{T-1}h_t\,e^{-2\pi i\omega t/T},
  \qquad \omega=0,\dots,T-1 .    
\end{equation}
For $1 \le k \le T$ let
$I_k\subseteq\{0,\dots,T-1\}$ collect the $k$ indices of largest
energy $\|\widehat{\mathbf h}_{\omega}\|_2$ (ties broken arbitrarily) and set
\begin{equation}
  (\mathcal P_k\widehat{\mathbf h})_{\omega}
  =\begin{cases}
      \widehat{\mathbf h}_{\omega}, & \omega\in I_k,\\
      0, & \omega\notin I_k .
    \end{cases}
  \qquad
  \mathbf h_k  = \mathcal F^{-1}(\mathcal P_k\widehat{\mathbf h}).    
\end{equation}
Then $\mathbf h_k$ minimises the approximation error
\begin{equation}
  \mathbf h_k
  =
  \operatorname*{arg\,min}_{\tilde{\mathbf h}\in(\mathbb C^{d})^{T}}
  \{
     \|\mathbf h-\tilde{\mathbf h}\|_F^2
     \enspace | \enspace 
     \|\mathcal F\tilde{\mathbf h}\|_0\le k
  \},    
\end{equation}
where $\|\cdot\|_F$ is the Frobenius norm and
$\|\cdot\|_0$ counts non-zero Fourier coefficients.
\end{proposition}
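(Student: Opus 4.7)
The strategy is to reduce the optimization in the time domain to an equivalent optimization in the Fourier domain via the unitarity of $\mathcal F$, and then solve the resulting sparse approximation problem coefficient-by-coefficient. Concretely, since $\mathcal F$ is a unitary operator on $(\mathbb C^{d})^{T}$ (with the natural Frobenius inner product inherited from the entrywise Hermitian structure), Parseval's identity yields $\|\mathbf h - \tilde{\mathbf h}\|_F^2 = \|\mathcal F\mathbf h - \mathcal F\tilde{\mathbf h}\|_F^2 = \|\widehat{\mathbf h} - \widehat{\tilde{\mathbf h}}\|_F^2$. The constraint $\|\mathcal F\tilde{\mathbf h}\|_0 \le k$ becomes a sparsity constraint on $\widehat{\tilde{\mathbf h}}$, so the problem is equivalent to minimizing $\|\widehat{\mathbf h} - \mathbf g\|_F^2$ over $\mathbf g \in (\mathbb C^{d})^{T}$ with $\|\mathbf g\|_0 \le k$.

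Next, I would decouple the optimization across frequency indices. Writing the objective as a sum of per-frequency contributions, $\|\widehat{\mathbf h} - \mathbf g\|_F^2 = \sum_{\omega=0}^{T-1} \|\widehat{\mathbf h}_\omega - \mathbf g_\omega\|_2^2$, it is clear that for any fixed support $S \subseteq \{0,\dots,T-1\}$ with $|S|\le k$, the minimizer is obtained by setting $\mathbf g_\omega = \widehat{\mathbf h}_\omega$ for $\omega \in S$ and $\mathbf g_\omega = 0$ otherwise, giving residual energy $\sum_{\omega \notin S} \|\widehat{\mathbf h}_\omega\|_2^2$.

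Then I would optimize over the support. Minimizing $\sum_{\omega \notin S} \|\widehat{\mathbf h}_\omega\|_2^2$ subject to $|S| \le k$ is equivalent to maximizing $\sum_{\omega \in S} \|\widehat{\mathbf h}_\omega\|_2^2$, which by a standard greedy argument (or by the rearrangement inequality) is achieved by letting $S$ equal the index set $I_k$ of the $k$ coordinates with largest values of $\|\widehat{\mathbf h}_\omega\|_2$, with ties broken arbitrarily. The resulting optimal $\mathbf g$ is precisely $\mathcal P_k \widehat{\mathbf h}$, and applying $\mathcal F^{-1}$ recovers $\mathbf h_k = \mathcal F^{-1}(\mathcal P_k\widehat{\mathbf h})$, proving optimality.

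I do not expect any serious obstacle, since every step is classical; the only care needed is in the vector-valued setting, where each Fourier coefficient lives in $\mathbb C^{d}$, so the natural notion of ``coefficient energy'' is the Euclidean norm $\|\widehat{\mathbf h}_\omega\|_2$ and $\|\cdot\|_0$ counts the number of frequencies at which this vector is non-zero. With this interpretation the argument above goes through verbatim, matching the cited results \citep{cohen2006compressed, davenport2012introduction}.
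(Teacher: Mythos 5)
Your proposal is correct and follows essentially the same route as the paper's proof: Parseval's identity for the unitary DFT reduces the problem to a coefficient-wise sparse approximation, the per-support optimum leaves residual energy $\sum_{\omega\notin S}\|\widehat{\mathbf h}_{\omega}\|_2^2$, and choosing $S=I_k$ is optimal. No gaps.
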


\begin{proof}
Parseval’s identity for the unitary DFT gives
\begin{equation}\label{eq:parseval}
  \|\mathbf h-\tilde{\mathbf h}\|_F^2 = \sum_{\omega=0}^{T-1}
      \|
        \widehat{\mathbf h}_{\omega}
        -\widehat{\tilde{\mathbf h}}_{\omega}
      \|_2^2
  \qquad
  (\forall\tilde{\mathbf h}\in(\mathbb C^{d})^{T}).
\end{equation} If a candidate $\tilde{\mathbf h}$ has spectral support $S=\{\omega:\widehat{\tilde{\mathbf h}}_{\omega}\neq0\}$ with $|S|\le k$, the summand for $\omega\in S$ in \eqref{eq:parseval} is minimised by $\widehat{\tilde{\mathbf h}}_{\omega}=\widehat{\mathbf h}_{\omega}$, while the optimal choice for $\omega\notin S$ is $\widehat{\tilde{\mathbf h}}_{\omega}=0$. Consequently the minimal achievable error for this support equals
\begin{equation}
  E(S)=  \sum_{\omega\notin S}\|\widehat{\mathbf h}_{\omega}\|_2^2 .
\end{equation}
Because $E(S)$ decreases when $\sum_{\omega\in S}\|\widehat{\mathbf h}_{\omega}\|_2^2$ increases, a support of cardinality $k$ is optimal only when it captures the $k$ largest energies — namely $S=I_k$ - yielding the value $\sum_{\omega\notin I_k}\|\widehat{\mathbf h}_{\omega}\|_2^2$. Reconstructing via the inverse DFT produces the minimiser $\mathbf h_k$ declared above.
\end{proof}

And then, the function $\mathcal G_\ell$ performs gated cross-modal fusion. Given the price representation $\mathbf r_{1:T}$ and the news representation $\mathbf t_{1:T}$ isolated from the input of the block, it computes a gate $a_{1:T}=\sigma(W_g[\mathbf r_{1:T};\mathbf t_{1:T}])\in(0,1)^{T\times1}$ and returns
\begin{equation}
\mathbf z_{1:T}=a_{1:T}\odot\mathbf r_{1:T} + (1-a_{1:T})\odot\mathbf t_{1:T},    
\end{equation}
together with bilinear and absolute-difference interactions, concatenated and linearly projected back to $\mathbb R^{d}$.  The entire sequence $\mathbf h^{(L)}_{1:T}$ emerging from the final block therefore contains price-aware news semantics and news-aware price spectra.  To encourage semantic consistency across modalities, we extract aligned embeddings
\begin{equation}
\mathbf r_{1:T}=P_{\text{price}}(\mathbf h^{(L)}_{1:T}),\qquad 
\mathbf t_{1:T}=P_{\text{news}}(\mathbf h^{(L)}_{1:T}),    
\end{equation}
where $P_{\ast}:(\mathbb R^{T\times 3d})\to(\mathbb R^{T\times d_a})$ are linear projections into an alignment space of width $d_a$.  The InfoNCE loss \citep{oord2018representation} introduced in the Preliminary subsection is then realized as
\begin{equation}
\mathcal L_{\text{align}}(\theta)= \mathbb E[-\log\frac{\exp(\langle \mathbf r_{\tau},\mathbf t_{\tau}\rangle/\tau)} {\sum_{\tau'}\exp(\langle \mathbf r_{\tau},\mathbf t_{\tau'}\rangle/\tau)}],    
\end{equation}
where $\langle\cdot,\cdot\rangle$ denotes the inner product in $\mathbb R^{d_a}$ and $\tau>0$ is a learned temperature parameter.

\textbf{Volatility Projection Head.} The final component converts the fused latent sequence into a point prediction $ \hat y_{t+H}$.  A temporal projection $W_p\in\mathbb R^{(T+H)\times T}$ is applied to the price-channel slice of $\mathbf h^{(L)}_{1:T}$, yielding an extended sequence $\tilde{\mathbf h}_{1:T+H}\in\mathbb R^{(T+H)\times d}$ that extrapolates the learned dynamics up to the horizon $H$.  We retain the last position and pass it through an affine map $W_o:\mathbb R^{d}\to\mathcal Y$, obtaining
\begin{equation}
\hat y_{t+H}=W_o(\tilde{\mathbf h}_{T+H}).    
\end{equation}

A deterministic output is appropriate because the target $y_{t+H}$ is defined as a smoothed aggregation of three volatility estimators and therefore concentrates most of its uncertainty in the inputs rather than in the measurement.

The network parameters are trained by minimizing the joint objective
\begin{equation}
\mathcal J(\theta) =\mathbb E_{(\mathbf Z_t,y_{t+H})} [(\Phi(\mathbf Z_t;\theta)-y_{t+H})^2] +\lambda\,\mathcal L_{\text{align}}(\theta),    
\end{equation}
where the first term is the mean-squared prediction loss $\mathcal L_{\text{pred}}$ defined earlier and $\lambda>0$ controls the relative weight of cross-modal alignment.  Stochastic gradient descent with automatic differentiation is employed to solve $\theta^\star=\arg\min_\theta\mathcal J(\theta)$.

\begin{table*}[htbp]
\centering
\small
\renewcommand{\arraystretch}{1.2}
\begin{tabular}{lccccccccccccccc}
\toprule
\multicolumn{1}{l}{\textbf{Method}} &
\multicolumn{2}{c}{\textbf{KO}} &
\multicolumn{2}{c}{\textbf{CMCSA}} &
\multicolumn{2}{c}{\textbf{COP}} &
\multicolumn{2}{c}{\textbf{GILD}} &
\multicolumn{2}{c}{\textbf{MRK}} &
\multicolumn{2}{c}{\textbf{NKE}} &
\multicolumn{2}{c}{\textbf{ORCL}} \\
\cmidrule(lr){2-3}\cmidrule(lr){4-5}\cmidrule(lr){6-7}
\cmidrule(lr){8-9}\cmidrule(lr){10-11}\cmidrule(lr){12-13}\cmidrule(lr){14-15}
& QLike & MAPE & QLike & MAPE & QLike & MAPE & QLike & MAPE &
  QLike & MAPE & QLike & MAPE & QLike & MAPE \\ \midrule
HAR                & 0.0775 & 0.2932 & 0.0937 & 0.3317 & 0.0749 & 0.2982 & 0.0674 & 0.3290 & 0.0804 & 0.3106 & 0.0766 & 0.3179 & 0.0972 & 0.3219 \\
HAR-X (OLS)        & 0.0658 & 0.3507 & 0.0746 & 0.3317 & 0.0779 & 0.2889 & 0.0961 & 0.4447 & 0.1060 & 0.3705 & 0.0889 & 0.2946 & 0.1068 & 0.3185 \\
HAR-X (Ridge)      & 0.0638 & 0.3484 & \textcolor{blue}{\textbf{0.0723}} & 0.3388 & 0.0746 & 0.2873 & 0.0764 & 0.3942 & 0.0762 & 0.3361 & 0.0713 & 0.2946 & 0.0926 & \textcolor{blue}{\textbf{0.3082}} \\
HAR-X (Lasso)      & \textcolor{blue}{\textbf{0.0627}} & 0.3535 & \textcolor{red}{\textbf{0.0692}} & 0.3216 & 0.0594 & \textcolor{blue}{\textbf{0.2818}} & 0.0716 & 0.4043 & \textcolor{blue}{\textbf{0.0685}} & 0.3475 & \textcolor{blue}{\textbf{0.0646}} & 0.2873 & \textcolor{blue}{\textbf{0.0809}} & \textcolor{red}{\textbf{0.3047}} \\
Informer           & 0.1629 & 0.4028 & 0.1125 & 0.3403 & 0.1399 & 0.3519 & 0.1007 & 0.4035 & 0.1959 & 0.3311 & 0.1453 & 0.3017 & 0.2443 & 0.4125 \\
Autoformer         & 0.1095 & 0.3756 & 0.1558 & 0.4036 & 0.1480 & 0.3625 & 0.1089 & 0.3622 & 0.0906 & 0.3420 & 0.1120 & 0.3297 & 0.2119 & 0.4302 \\
DLinear            & 0.0845 & 0.3371 & 0.0928 & 0.3355 & 0.0665 & 0.3083 & 0.0688 & 0.2977 & 0.0792 & 0.3174 & 0.0744 & 0.3103 & 0.1006 & 0.3614 \\
TimesNet           & 0.0666 & \textcolor{blue}{\textbf{0.2943}} & 0.0833 & 0.3177 & 0.0652 & 0.2839 & \textcolor{blue}{\textbf{0.0660}} & 0.2852 & 0.0760 & 0.3065 & 0.0679 & 0.2905 & 0.0864 & 0.3182 \\
PAttn              & 0.0699 & 0.3046 & 0.0805 & \textcolor{blue}{\textbf{0.3127}} & 0.0603 & 0.2888 & 0.0678 & 0.2853 & 0.0699 & 0.3031 & 0.0695 & 0.3038 & 0.0959 & 0.3099 \\
TimeXer            & 0.0661 & 0.3040 & 0.0807 & 0.3162 & \textcolor{blue}{\textbf{0.0589}} & 0.3035 & 0.0673 & \textcolor{red}{\textbf{0.2760}} & 0.0797 & \textcolor{red}{\textbf{0.2909}} & 0.0671 & \textcolor{blue}{\textbf{0.2865}} & 0.0849 & 0.3255 \\
\textbf{M2VN (Ours)} & \textcolor{red}{\textbf{0.0589}} & \textcolor{red}{\textbf{0.2771}} & 0.0769 & \textcolor{red}{\textbf{0.3025}} & \textcolor{red}{\textbf{0.0552}} & \textcolor{red}{\textbf{0.2722}} & \textcolor{red}{\textbf{0.0630}} & \textcolor{blue}{\textbf{0.2838}} & \textcolor{red}{\textbf{0.0679}} & \textcolor{blue}{\textbf{0.3005}} & \textcolor{red}{\textbf{0.0594}} & \textcolor{red}{\textbf{0.2747}} & \textcolor{red}{\textbf{0.0696}} & 0.3092 \\
\bottomrule
\end{tabular}
\vspace{2mm}

\caption{Out-of-sample forecasting accuracy of ten statistical and deep learning models for daily volatility of seven U.S. stocks. Results are for quasi-likelihood loss (QLike) and mean absolute percentage error (MAPE). For each stock–metric pair, the best model is in \textcolor{red}{red}, second-best in \textcolor{blue}{blue}. Values are averages of three runs; standard deviations are sufficiently small and omitted.}
\label{tab1:main_result}
\vspace{-0.5cm}
\end{table*}

\begin{table*}[htbp]

\centering
\small
\renewcommand{\arraystretch}{1.2}
\begin{tabular}{lcccccccccccccc}
\toprule
\multicolumn{1}{l}{\textbf{Ablation}} & \multicolumn{2}{c}{\textbf{KO}} & \multicolumn{2}{c}{\textbf{CMCSA}} & \multicolumn{2}{c}{\textbf{COP}} & \multicolumn{2}{c}{\textbf{GILD}} & \multicolumn{2}{c}{\textbf{MRK}} & \multicolumn{2}{c}{\textbf{NKE}} & \multicolumn{2}{c}{\textbf{ORCL}} \\
\cmidrule(lr){2-3} \cmidrule(lr){4-5} \cmidrule(lr){6-7} \cmidrule(lr){8-9} \cmidrule(lr){10-11} \cmidrule(lr){12-13} \cmidrule(lr){14-15}
& QLike & MAPE & QLike & MAPE & QLike & MAPE & QLike & MAPE & QLike & MAPE & QLike & MAPE & QLike & MAPE \\
\midrule
M2VN & 
\textbf{0.0589} & \textbf{0.2771} & \textbf{0.0769} & \textbf{0.3025} & \textbf{0.0552} & \textbf{0.2722} & \textbf{0.0630} & \textbf{0.2838} & \textbf{0.0679} & \textbf{0.3005} & \textbf{0.0594} & \textbf{0.2747} & \textbf{0.0696} & 0.3092 \\
M2VN \! {\scriptsize W/O $V_{\tau}^{(d)}$}\!\! \!\! \!\! & 
0.0610 & 0.2783 & 0.0812 & 0.3105 & 0.0580 & 0.2822 & 0.0659 & 0.2877 & 0.0701 & 0.3100 & 0.0648 & 0.3039 & 0.0774 & \textbf{0.3076} \\ 
M2VN \! {\scriptsize W/O $N_{\tau}$}\!\! \!\! \!\!       & 0.0641 & 0.2904 & 0.0866 & 0.3184 & 0.0599 & 0.3000 & 0.0632 & 0.2851 & 0.0692 & 0.3169 & 0.0638 & 0.2912 & 0.0780 & 0.3251 \\
\bottomrule
\end{tabular}

\vspace{2mm}
\caption{Ablation study on the impact of the daily volume feature ($V_{\tau}^{(d)}$) in \texttt{M2VN} forecasting, as well as impact of news modality on \texttt{M2VN} volatility forecasting performance. Bold indicates the best result for each stock-metric pair.}
\label{tab1:main_ablation}
\vspace{-0.5cm}
\end{table*}

\section{Experiments}
Now we present experiment results to thoroughly demonstrate the performance of \texttt{M2VN} on real-world benchmark datasets. The pre-processed dataset, raw news article data, and implementation code are publicly accessible at https://github.com/Yoontae6719/M2VN-Multi-Modal-Learning-Network-for-Volatility-Forecasting

\subsection{Implementation Details}
We present the details of datasets, baseline models, training details and evaluation metrics.

\textbf{Data Sources.} We construct a unified daily panel comprising seven large-capitalization U.S. equities — Coca-Cola (KO), Comcast (CMCSA), ConocoPhillips (COP), Gilead Sciences (GILD), Merck \& Co. (MRK), Nike (NKE) and Oracle (ORCL). For each ticker, we extract both end-of-day price information (open, high, low, close and volume) and the full text of associated news articles from FNSPID \citep{dong2024fnspid}. The FNSPID dataset \citep{dong2024fnspid} is an open-source resource that integrates financial news articles with corresponding time series data. All news articles are temporally aligned on a daily basis in a structured format. Owing to the limited availability of news data, above seven tickers were selected according to the volume of associated articles.

\textbf{Data Processing.}
Both news and time series data are partitioned chronologically into a training span from 1 January 2013 to 31 December 2017, a validation span from 1 January 2018 to 31 December 2020, and a strictly held-out test span from 1 January 2021 to 15 December 2023. News documents are temporally aligned so that only articles published on the calendar day before as the corresponding price record enter the model’s input for that day, thereby precluding any forward-looking leakage.

\textbf{Baseline models.} To evaluate the contribution of \texttt{M2VN}, we compare it against ten strong baselines spanning both classical econometrics and contemporary deep learning. The classical group contains Standard \textbf{HAR}, \textbf{HAR-X (OLS)}, \textbf{HAR-X (Lasso)} and \textbf{HAR-X (Ridge)}, variants of the Heterogeneous AutoRegressive model widely used in volatility forecasting \citep{clements2024harvesting}. The modern deep learning group encompasses \textbf{Informer}\citep{zhou2021informer}, \textbf{Autoformer}\citep{wu2021autoformer}, \textbf{DLinear}\citep{zeng2023transformers}, \textbf{TimesNet}\citep{wu2023timesnet}, \textbf{TimeXer}\citep{wang2024timexer} and the \textbf{PAttn} \citep{tan2024language} architecture. For all three HAR-X baselines, we first reduce the dimensionality of the daily news embeddings via principal-component analysis (PCA), retaining the top components that explain 95\% of variance. This mitigates the curse of dimensionality while keeping the econometric specification parsimonious. All deep learning baselines ingest the raw news embeddings directly.

\textbf{Training detail.} The model is trained using a sequence length of 12 time steps, where the last 2 steps serve as contextual labels, and it forecasts 1 step into the future. All deep learning baselines and our \texttt{M2VN} model are optimized with Adam, an initial learning rate of $3\times10^{-4}$, and cosine decay. Early stopping on the validation set prevents over-fitting. In our models, hyper-parameters are selected by random search, drawing 100 configurations from the following grids: latent dimension $d\in \{12,24,32,64\}$, alignment dimension $d_{\alpha}\in \{32,64,128,256\}$, temperature $\tau\in \{0.01,0.03,0.07\}$, Inception hidden width $d_{i}\in \{24,32,64,128,256,512\}$ and top-$k$ signal $k \in \{4, 5, 6\}$ . Equivalent searches (with architecture-specific ranges) are conducted for every deep baseline to ensure a fair comparison. Each experiment is repeated with three random seeds. we report the mean across runs.

\textbf{Evaluation metrics.} Model performance is quantified with Mean Absolute Percentage Error (MAPE) and the Quasi-Likelihood loss (QLIKE). MAPE delivers an easily interpretable, scale-free percentage error that treats over- and under-predictions symmetrically. QLIKE, rooted in the quasi-log-likelihood of a Gaussian volatility model, places extra weight on under-forecasting during high-volatility episodes. For a forecast $\hat{y}_{t}$ of the realised volatility $y_{t}$ over
$n$ test points, they are defined as
\begin{equation}
\small
\text{MAPE} = \frac{100\%}{n}\sum_{t=1}^{n}|\frac{y_t-\hat{y}_t}{y_t}|,\quad
\end{equation}

\begin{equation}
\small
\text{QLIKE} =\frac{1}{n} \sum_{i=1}^{n} \left(\frac{y_i}{\hat{y}_i}- \ln\!\left(\frac{y_i}{\hat{y}_i}\right) - 1\right).
\end{equation}
As our model directly forecasts volatility, we define both $y_t$ and $\hat{y}$ in terms of standard deviation for the QLIKE calculation. While the QLIKE metric is conventionally applied to variance, this consistent application to standard deviation allows it to serve as a robust tool for the internal ranking of model performance. It preserves the essential characteristic of penalizing under-prediction more heavily, which is crucial for our comparative study.

\begin{figure*}[!htbp]
\small{
  \centering
  \includegraphics[width=\linewidth]{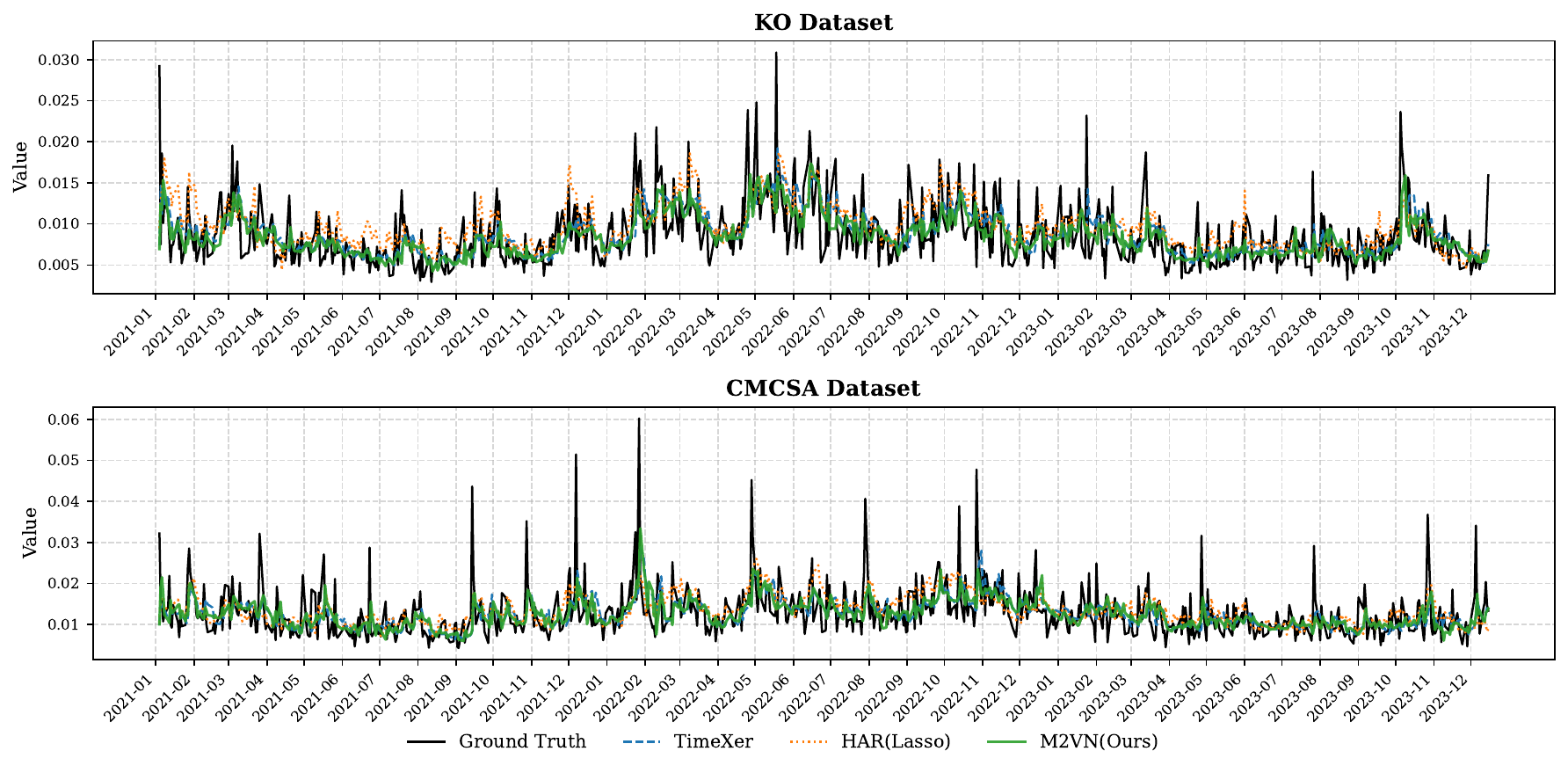}}
  \caption{Visual comparison of out-of-sample forecast quality for the KO (top) and CMCSA (bottom) datasets over the test period of 2021-2023 . The \texttt{M2VN} model's predictions (\textcolor{green}{green}) are plotted against the ground truth realized volatility (black), alongside forecasts from the HAR-X (Lasso) (\textcolor{orange}{orange}) and TimeXer (\textcolor{blue}{blue}) baselines. The \texttt{M2VN} demonstrates a superior ability to track the fluctuations and sharp peaks of the ground truth data compared to the other models.}
  \label{figure_showcase}
\end{figure*}

\subsection{\texttt{M2VN} is Useful for Volatility Forecasting?}
To answer the central research question of whether our proposed multi-modal framework improves volatility forecasting, we conduct a comprehensive empirical evaluation against a suite of strong classical and deep learning baselines. This section details the out-of-sample performance of \texttt{M2VN}, whose architecture and learning objectives are described in Section \ref{method}, on seven real-world equity datasets.

The quantitative results are presented in Table \ref{tab1:main_result}. Our proposed model, \texttt{M2VN}, demonstrates consistently strong performance, outperforming eight out of ten baselines in the majority of stock-metric settings. Specifically, \texttt{M2VN} achieves either the best (highlighted in \textcolor{red}{red}) or second-best (in \textcolor{blue}{blue}) result in 12 out of 14 evaluation settings (across 7 stocks and 2 metrics). Notably, it secures the top rank in ten cases (71.4\%) and places within the top two in the vast majority of settings. While the HAR-X (Lasso) baseline yields some competitive results, obtaining the best performance in two settings (CMCSA–QLike and ORCL–MAPE), \texttt{M2VN} either matches or exceeds Lasso’s accuracy in most other stock–metric combinations, highlighting its overall superiority. Relative to \textit{TimesNet} \citep{wu2023timesnet}, an ablation of our backbone that omits the cross-modal alignment loss $\mathcal{L}_{\text{align}}$, mean QLike and MAPE fall by 11.7 \% and 3.6 \%, respectively. 

The superior performance of \texttt{M2VN} can be attributed to several key architectural innovations. First, the explicit fusion of quantitative market data and qualitative news narratives, enforced by the auxiliary contrastive alignment loss ($\mathcal{L}_{\text{align}}$), enables the model to learn a richer, more robust representation of the underlying market state than models relying on price data alone. Second, the latent dynamics module, which leverages spectral filtering to isolate the dominant periodic components of the volatility signal (cf. Proposition \ref{prop:best-k-term}), effectively denoised the input sequences. This structural inductive bias may encourages the model to focus on persistent, predictable patterns rather than overfitting to stochastic noise, a common pitfall in financial TSF. These empirical results suggest that the architectural choices embodied in \texttt{M2VN} are suitable for the complex task of volatility forecasting.

\subsection{Is Volume a Proxy for News Information?}
To investigate the intricate relationship between market activity and narrative information, we examine whether trading volume can serve as a proxy for news. According to the efficient market hypothesis, news is rapidly incorporated into prices, with trading volume often reflecting the arrival and processing of such information. This raises the question of whether the information contained in news is subsumed by the more readily available quantitative volume data.

To test this hypothesis, we conduct a targeted ablation study on our \texttt{M2VN} model. Specifically, we compare the performance of the full model — which leverages both daily trading volume ($V_{\tau}^{(d)}$) from the quantitative market state ($X_{t}$) and news embeddings ($N_{t}$) - against an identical model with the volume feature ablated. In this alternative configuration, the model relies on news embeddings and other market variables to capture information flow. If trading volume were merely a proxy for news, removing it from a model already equipped with rich news representations should yield only marginal changes in performance.

The results of this experiment are summarized in Table \ref{tab1:main_ablation}. Excluding trading volume results in a statistically significant and consistent reduction in forecasting accuracy across all seven equities. The complete model (with volume, standard \texttt{M2VN}) outperforms its ablated counterpart (without volume, \texttt{M2VN} W/O $V_{\tau}^{(d)}$) on 27 out of 28 stock-metric pairs. On average, incorporating volume reduces MAPE by 2.86$\pm$3.06\% and QLike by 5.64$\pm$2.40\% across all stocks.

These findings suggest that trading volume is not merely a proxy for news but rather a source of \textbf{complementary information}. The substantial improvement in predictive accuracy achieved by including volume — even alongside rich textual news features — indicates that volume captures unique aspects of market dynamics. These may capture investor disagreement, liquidity, or the intensity of market attention, which are not fully reflected in the semantic content of news alone. Therefore, to develop high-fidelity financial forecasting models, it is crucial to regard both news narratives and trading volume as essential, non-redundant sources of information.

\subsection{Is News Embedding Predictive?}
To ascertain the direct contribution of news embeddings to the predictive capability of the M2VN model, an ablation study was conducted by comparing the model's performance with and without the news modality ($N_t$). Table \ref{tab1:main_ablation} presents the out-of-sample forecasting accuracy for all seven U.S. stocks across two key metrics: Quasi-Likelihood Loss (QLike) and Mean Absolute Percentage Error (MAPE). The results unequivocally demonstrate that the inclusion of news embeddings consistently enhances forecasting performance.

Across all evaluated stocks and both metrics, the M2VN model with news embeddings (standard \texttt{M2VN}) significantly outperforms its counterpart without news embeddings (W/O $N_t$). For instance, in the case of Coca-Cola (KO), the QLike decreased from 0.0641 to 0.0589 and MAPE from 0.2904 to 0.2771 when news was incorporated. Similar improvements are observed for Comcast (CMCSA), where QLike dropped from 0.0866 to 0.0769 and MAPE from 0.3184 to 0.3025. Notably, for stocks like ConocoPhillips (COP) and Oracle (ORCL), the performance gaps are particularly pronounced, with substantial reductions in both QLike and MAPE, indicating that news content provides highly meaningful signals for these equities. For COP, QLike improved from 0.0599 to 0.0552 and MAPE from 0.3000 to 0.2722. Similarly, for ORCL, QLike went from 0.0780 to 0.0696 and MAPE from 0.3251 to 0.3092.

While the benefit of news integration is universal, the magnitude of improvement varies across stocks. For example, Gilead Sciences (GILD) shows a comparatively smaller performance gain (0.0632 vs. 0.0630 for QLike, and 0.2851 vs. 0.2838 for MAPE), suggesting that its volatility dynamics might be less influenced by publicly available news narratives, or that its price movements are predominantly driven by other factors captured within the quantitative market state. Conversely, stocks exhibiting larger gains, such as COP and ORCL, may have volatility profiles more responsive to specific news events or broader market sentiment conveyed through news. This differentiation could arise from factors like sector-specific news impact, company-specific event frequency, or the general information efficiency of the stock. News embeddings likely contribute by providing timely, qualitative insights into market-moving events or shifts in sentiment that are not immediately captured by historical price data alone. Their effectiveness in certain cases, and their less pronounced impact in others, underscore the interconnected relationship between qualitative information and asset price dynamics. Overall, the consistent outperformance of the model incorporating news embeddings provides strong evidence that this modality contributes meaningful predictive signals for volatility forecasting within the \texttt{M2VN} multi-modal architecture.

\subsection{Showcasing Forecast Quality}
To supplement the aggregate metrics in Table~\ref{tab1:main_result},  this section qualitatively compares \texttt{M2VN}’s out-of-sample forecasts against ground truth and two strong baselines — HAR-X (Lasso) and TimeXer — on the CMCSA and KO datasets (2021–2023), as shown in Figure~\ref{figure_showcase}. Notably, while all models struggle to predict the exact magnitude of extreme, single-day volatility shocks (e.g., the CMCSA spike in early 2022), \texttt{M2VN} more effectively captures these peaks compared to the baselines. In contrast, HAR-X (Lasso) and TimeXer tend to produce overly smoothed forecasts that underestimate sharp spikes, with HAR-X (Lasso) particularly weak in tracking the post-event decay. This results in persistent over-prediction during calmer periods, as seen with HAR-X (Lasso) in KO dataset from August 2022 to January 2023 and April to June 2023. Although \texttt{M2VN} may underestimate the absolute height of extreme movements, it reliably identifies the direction and timing of sharp upward shifts, while the baselines often produce muted or delayed responses. This qualitative improvement is significant, as forecasting rare volatility shocks is a well-known challenge in financial modeling due to their underrepresentation in training data, and \texttt{M2VN}'s ability to anticipate such events marks a notable advance over existing methods.

\section{Conclusion}
This study introduced \texttt{M2VN}, a novel deep learning framework that effectively unifies structured financial time series with unstructured news narratives for volatility forecasting. By leveraging temporally faithful news embeddings from TiMaGPT and cross-modal alignment through a contrastive loss, \texttt{M2VN} tackles key challenges of data modality fusion and look-ahead bias. Experiments on large-cap U.S. equities show that M2VN consistently outperforms both classical econometric models and deep learning baselines, achieving higher accuracy in both MAPE and QLike metrics. Ablation studies further confirm that both trading volume and news embeddings provide unique, complementary predictive signals. These findings highlight the value of cross-modal integration for improved volatility modeling and financial risk management.

\begin{acks}
The authors would like to thank the reviewers for their insightful and detailed comments. Y.K. gratefully acknowledges financial support from Deutsche Bank AG.
\end{acks}

\bibliographystyle{ACM-Reference-Format}
\bibliography{sample-base}


\begin{thebibliography}{40}


\ifx \showCODEN    \undefined \def \showCODEN     #1{\unskip}     \fi
\ifx \showISBNx    \undefined \def \showISBNx     #1{\unskip}     \fi
\ifx \showISBNxiii \undefined \def \showISBNxiii  #1{\unskip}     \fi
\ifx \showISSN     \undefined \def \showISSN      #1{\unskip}     \fi
\ifx \showLCCN     \undefined \def \showLCCN      #1{\unskip}     \fi
\ifx \shownote     \undefined \def \shownote      #1{#1}          \fi
\ifx \showarticletitle \undefined \def \showarticletitle #1{#1}   \fi
\ifx \showURL      \undefined \def \showURL       {\relax}        \fi
\providecommand\bibfield[2]{#2}
\providecommand\bibinfo[2]{#2}
\providecommand\natexlab[1]{#1}
\providecommand\showeprint[2][]{arXiv:#2}

\bibitem[Alexander(2008)]%
        {alexander2008market}
\bibfield{author}{\bibinfo{person}{Carol Alexander}.} \bibinfo{year}{2008}\natexlab{}.
\newblock \bibinfo{booktitle}{\emph{Market risk analysis, practical financial econometrics}}. Vol.~\bibinfo{volume}{2}.
\newblock \bibinfo{publisher}{John Wiley \& Sons}.
\newblock


\bibitem[Andersen and Bollerslev(1998)]%
        {andersen1998answering}
\bibfield{author}{\bibinfo{person}{Torben~G Andersen} {and} \bibinfo{person}{Tim Bollerslev}.} \bibinfo{year}{1998}\natexlab{}.
\newblock \showarticletitle{Answering the skeptics: Yes, standard volatility models do provide accurate forecasts}.
\newblock \bibinfo{journal}{\emph{International economic review}} (\bibinfo{year}{1998}), \bibinfo{pages}{885--905}.
\newblock


\bibitem[Andersen et~al\mbox{.}(2006)]%
        {andersen2006volatility}
\bibfield{author}{\bibinfo{person}{Torben~G Andersen}, \bibinfo{person}{Tim Bollerslev}, \bibinfo{person}{Peter~F Christoffersen}, {and} \bibinfo{person}{Francis~X Diebold}.} \bibinfo{year}{2006}\natexlab{}.
\newblock \showarticletitle{Volatility and correlation forecasting}.
\newblock \bibinfo{journal}{\emph{Handbook of economic forecasting}}  \bibinfo{volume}{1} (\bibinfo{year}{2006}), \bibinfo{pages}{777--878}.
\newblock


\bibitem[Ansari et~al\mbox{.}(2024)]%
        {ansari2024chronos}
\bibfield{author}{\bibinfo{person}{Abdul~Fatir Ansari}, \bibinfo{person}{Lorenzo Stella}, \bibinfo{person}{Caner Turkmen}, \bibinfo{person}{Xiyuan Zhang}, \bibinfo{person}{Pedro Mercado}, \bibinfo{person}{Huibin Shen}, \bibinfo{person}{Oleksandr Shchur}, \bibinfo{person}{Syama~Sundar Rangapuram}, \bibinfo{person}{Sebastian~Pineda Arango}, \bibinfo{person}{Shubham Kapoor}, {et~al\mbox{.}}} \bibinfo{year}{2024}\natexlab{}.
\newblock \showarticletitle{Chronos: Learning the language of time series}.
\newblock \bibinfo{journal}{\emph{arXiv preprint arXiv:2403.07815}} (\bibinfo{year}{2024}).
\newblock


\bibitem[Atkins et~al\mbox{.}(2018)]%
        {atkins2018financial}
\bibfield{author}{\bibinfo{person}{Adam Atkins}, \bibinfo{person}{Mahesan Niranjan}, {and} \bibinfo{person}{Enrico Gerding}.} \bibinfo{year}{2018}\natexlab{}.
\newblock \showarticletitle{Financial news predicts stock market volatility better than close price}.
\newblock \bibinfo{journal}{\emph{The Journal of Finance and Data Science}} \bibinfo{volume}{4}, \bibinfo{number}{2} (\bibinfo{year}{2018}), \bibinfo{pages}{120--137}.
\newblock


\bibitem[Audrino et~al\mbox{.}(2020)]%
        {audrino2020impact}
\bibfield{author}{\bibinfo{person}{Francesco Audrino}, \bibinfo{person}{Fabio Sigrist}, {and} \bibinfo{person}{Daniele Ballinari}.} \bibinfo{year}{2020}\natexlab{}.
\newblock \showarticletitle{The impact of sentiment and attention measures on stock market volatility}.
\newblock \bibinfo{journal}{\emph{International Journal of Forecasting}} \bibinfo{volume}{36}, \bibinfo{number}{2} (\bibinfo{year}{2020}), \bibinfo{pages}{334--357}.
\newblock


\bibitem[Christensen et~al\mbox{.}(2023)]%
        {christensen2023machine}
\bibfield{author}{\bibinfo{person}{Kim Christensen}, \bibinfo{person}{Mathias Siggaard}, {and} \bibinfo{person}{Bezirgen Veliyev}.} \bibinfo{year}{2023}\natexlab{}.
\newblock \showarticletitle{A machine learning approach to volatility forecasting}.
\newblock \bibinfo{journal}{\emph{Journal of Financial Econometrics}} \bibinfo{volume}{21}, \bibinfo{number}{5} (\bibinfo{year}{2023}), \bibinfo{pages}{1680--1727}.
\newblock


\bibitem[Clements et~al\mbox{.}(2024)]%
        {clements2024harvesting}
\bibfield{author}{\bibinfo{person}{Adam Clements}, \bibinfo{person}{Daniel Preve}, {and} \bibinfo{person}{Clarence Tee}.} \bibinfo{year}{2024}\natexlab{}.
\newblock \showarticletitle{Harvesting the HAR-X Volatility Model}.
\newblock \bibinfo{journal}{\emph{Available at SSRN 4733597}} (\bibinfo{year}{2024}).
\newblock


\bibitem[Clements and Preve(2021)]%
        {clements2021practical}
\bibfield{author}{\bibinfo{person}{Adam Clements} {and} \bibinfo{person}{Daniel~PA Preve}.} \bibinfo{year}{2021}\natexlab{}.
\newblock \showarticletitle{A practical guide to harnessing the har volatility model}.
\newblock \bibinfo{journal}{\emph{Journal of Banking \& Finance}}  \bibinfo{volume}{133} (\bibinfo{year}{2021}), \bibinfo{pages}{106285}.
\newblock


\bibitem[Cohen et~al\mbox{.}(2006)]%
        {cohen2006compressed}
\bibfield{author}{\bibinfo{person}{Albert Cohen}, \bibinfo{person}{Wolfgang Dahmen}, {and} \bibinfo{person}{Ronald DeVore}.} \bibinfo{year}{2006}\natexlab{}.
\newblock \bibinfo{booktitle}{\emph{Compressed Sensing and Best \emph{k}-Term Approximation}}.
\newblock \bibinfo{type}{{T}echnical {R}eport} IGPM Report 260. \bibinfo{institution}{RWTH Aachen University}.
\newblock
\urldef\tempurl%
\url{https://www.igpm.rwth-aachen.de/Download/reports/pdf/IGPM260.pdf}
\showURL{%
\tempurl}
\newblock
\shownote{Technical report, later published in J.\ Amer.\ Math.\ Soc.\ 22(1):211--231, 2009}.


\bibitem[Corsi(2009)]%
        {corsi2009simple}
\bibfield{author}{\bibinfo{person}{Fulvio Corsi}.} \bibinfo{year}{2009}\natexlab{}.
\newblock \showarticletitle{A simple approximate long-memory model of realized volatility}.
\newblock \bibinfo{journal}{\emph{Journal of Financial Econometrics}} \bibinfo{volume}{7}, \bibinfo{number}{2} (\bibinfo{year}{2009}), \bibinfo{pages}{174--196}.
\newblock


\bibitem[Davenport et~al\mbox{.}(2012)]%
        {davenport2012introduction}
\bibfield{author}{\bibinfo{person}{Mark~A Davenport}, \bibinfo{person}{Marco~F Duarte}, \bibinfo{person}{Yonina~C Eldar}, {and} \bibinfo{person}{Gitta Kutyniok}.} \bibinfo{year}{2012}\natexlab{}.
\newblock \bibinfo{booktitle}{\emph{Introduction to compressed sensing}}.
\newblock \bibinfo{publisher}{Cambridge University Press}. 1--64 pages.
\newblock


\bibitem[Degiannakis and Filis(2017)]%
        {degiannakis2017forecasting}
\bibfield{author}{\bibinfo{person}{Stavros Degiannakis} {and} \bibinfo{person}{George Filis}.} \bibinfo{year}{2017}\natexlab{}.
\newblock \showarticletitle{Forecasting oil price realized volatility using information channels from other asset classes}.
\newblock \bibinfo{journal}{\emph{Journal of International Money and Finance}}  \bibinfo{volume}{76} (\bibinfo{year}{2017}), \bibinfo{pages}{28--49}.
\newblock


\bibitem[Dong et~al\mbox{.}(2024)]%
        {dong2024fnspid}
\bibfield{author}{\bibinfo{person}{Zihan Dong}, \bibinfo{person}{Xinyu Fan}, {and} \bibinfo{person}{Zhiyuan Peng}.} \bibinfo{year}{2024}\natexlab{}.
\newblock \showarticletitle{Fnspid: A comprehensive financial news dataset in time series}. In \bibinfo{booktitle}{\emph{Proceedings of the 30th ACM SIGKDD Conference on Knowledge Discovery and Data Mining}}. \bibinfo{pages}{4918--4927}.
\newblock


\bibitem[Drinkall et~al\mbox{.}(2024)]%
        {drinkall2024time}
\bibfield{author}{\bibinfo{person}{Felix Drinkall}, \bibinfo{person}{Eghbal Rahimikia}, \bibinfo{person}{Janet~B Pierrehumbert}, {and} \bibinfo{person}{Stefan Zohren}.} \bibinfo{year}{2024}\natexlab{}.
\newblock \showarticletitle{Time machine GPT}.
\newblock \bibinfo{journal}{\emph{arXiv preprint arXiv:2404.18543}} (\bibinfo{year}{2024}).
\newblock


\bibitem[Engle and Bollerslev(1986)]%
        {engle1986modelling}
\bibfield{author}{\bibinfo{person}{Robert~F Engle} {and} \bibinfo{person}{Tim Bollerslev}.} \bibinfo{year}{1986}\natexlab{}.
\newblock \showarticletitle{Modelling the persistence of conditional variances}.
\newblock \bibinfo{journal}{\emph{Econometric reviews}} \bibinfo{volume}{5}, \bibinfo{number}{1} (\bibinfo{year}{1986}), \bibinfo{pages}{1--50}.
\newblock


\bibitem[Garman and Klass(1980)]%
        {garman1980estimation}
\bibfield{author}{\bibinfo{person}{Mark~B Garman} {and} \bibinfo{person}{Michael~J Klass}.} \bibinfo{year}{1980}\natexlab{}.
\newblock \showarticletitle{On the estimation of security price volatilities from historical data}.
\newblock \bibinfo{journal}{\emph{Journal of business}} (\bibinfo{year}{1980}), \bibinfo{pages}{67--78}.
\newblock


\bibitem[Gruver et~al\mbox{.}(2023)]%
        {gruver2023large}
\bibfield{author}{\bibinfo{person}{Nate Gruver}, \bibinfo{person}{Marc Finzi}, \bibinfo{person}{Shikai Qiu}, {and} \bibinfo{person}{Andrew~G Wilson}.} \bibinfo{year}{2023}\natexlab{}.
\newblock \showarticletitle{Large language models are zero-shot time series forecasters}.
\newblock \bibinfo{journal}{\emph{Advances in Neural Information Processing Systems}}  \bibinfo{volume}{36} (\bibinfo{year}{2023}), \bibinfo{pages}{19622--19635}.
\newblock


\bibitem[Hwang et~al\mbox{.}(2025)]%
        {hwang2025decision}
\bibfield{author}{\bibinfo{person}{Yoontae Hwang}, \bibinfo{person}{Yaxuan Kong}, \bibinfo{person}{Stefan Zohren}, {and} \bibinfo{person}{Yongjae Lee}.} \bibinfo{year}{2025}\natexlab{}.
\newblock \showarticletitle{Decision-informed Neural Networks with Large Language Model Integration for Portfolio Optimization}.
\newblock \bibinfo{journal}{\emph{arXiv preprint arXiv:2502.00828}} (\bibinfo{year}{2025}).
\newblock


\bibitem[Jin et~al\mbox{.}(2023)]%
        {jin2023time}
\bibfield{author}{\bibinfo{person}{Ming Jin}, \bibinfo{person}{Shiyu Wang}, \bibinfo{person}{Lintao Ma}, \bibinfo{person}{Zhixuan Chu}, \bibinfo{person}{James~Y Zhang}, \bibinfo{person}{Xiaoming Shi}, \bibinfo{person}{Pin-Yu Chen}, \bibinfo{person}{Yuxuan Liang}, \bibinfo{person}{Yuan-Fang Li}, \bibinfo{person}{Shirui Pan}, {et~al\mbox{.}}} \bibinfo{year}{2023}\natexlab{}.
\newblock \showarticletitle{Time-llm: Time series forecasting by reprogramming large language models}.
\newblock \bibinfo{journal}{\emph{arXiv preprint arXiv:2310.01728}} (\bibinfo{year}{2023}).
\newblock


\bibitem[Kim et~al\mbox{.}(2024)]%
        {kim2024financial}
\bibfield{author}{\bibinfo{person}{Alex Kim}, \bibinfo{person}{Maximilian Muhn}, {and} \bibinfo{person}{Valeri~V Nikolaev}.} \bibinfo{year}{2024}\natexlab{}.
\newblock \showarticletitle{Financial Statement Analysis with Large Language Models}.
\newblock \bibinfo{journal}{\emph{Chicago Booth Research Paper Forthcoming, Fama-Miller Working Paper}} (\bibinfo{year}{2024}).
\newblock


\bibitem[Kong et~al\mbox{.}(2024a)]%
        {kong2024large}
\bibfield{author}{\bibinfo{person}{Yaxuan Kong}, \bibinfo{person}{Yuqi Nie}, \bibinfo{person}{Xiaowen Dong}, \bibinfo{person}{John~M Mulvey}, \bibinfo{person}{H~Vincent Poor}, \bibinfo{person}{Qingsong Wen}, {and} \bibinfo{person}{Stefan Zohren}.} \bibinfo{year}{2024}\natexlab{a}.
\newblock \showarticletitle{Large Language Models for Financial and Investment Management: Applications and Benchmarks.}
\newblock \bibinfo{journal}{\emph{Journal of Portfolio Management}} \bibinfo{volume}{51}, \bibinfo{number}{2} (\bibinfo{year}{2024}).
\newblock


\bibitem[Kong et~al\mbox{.}(2024b)]%
        {kong2024large_2}
\bibfield{author}{\bibinfo{person}{Yaxuan Kong}, \bibinfo{person}{Yuqi Nie}, \bibinfo{person}{Xiaowen Dong}, \bibinfo{person}{John~M Mulvey}, \bibinfo{person}{H~Vincent Poor}, \bibinfo{person}{Qingsong Wen}, {and} \bibinfo{person}{Stefan Zohren}.} \bibinfo{year}{2024}\natexlab{b}.
\newblock \showarticletitle{Large Language Models for Financial and Investment Management: Models, Opportunities, and Challenges.}
\newblock \bibinfo{journal}{\emph{Journal of Portfolio Management}} \bibinfo{volume}{51}, \bibinfo{number}{2} (\bibinfo{year}{2024}).
\newblock


\bibitem[Korkusuz(2024)]%
        {korkusuz2024beyond}
\bibfield{author}{\bibinfo{person}{Burak Korkusuz}.} \bibinfo{year}{2024}\natexlab{}.
\newblock \showarticletitle{Beyond the S\&P 500: examining the role of external volatilities in market forecasting}.
\newblock \bibinfo{journal}{\emph{Review of Economic Design}} (\bibinfo{year}{2024}), \bibinfo{pages}{1--28}.
\newblock


\bibitem[Liu et~al\mbox{.}(2024)]%
        {liu2024time}
\bibfield{author}{\bibinfo{person}{Haoxin Liu}, \bibinfo{person}{Shangqing Xu}, \bibinfo{person}{Zhiyuan Zhao}, \bibinfo{person}{Lingkai Kong}, \bibinfo{person}{Harshavardhan Prabhakar~Kamarthi}, \bibinfo{person}{Aditya Sasanur}, \bibinfo{person}{Megha Sharma}, \bibinfo{person}{Jiaming Cui}, \bibinfo{person}{Qingsong Wen}, \bibinfo{person}{Chao Zhang}, {et~al\mbox{.}}} \bibinfo{year}{2024}\natexlab{}.
\newblock \showarticletitle{Time-mmd: Multi-domain multimodal dataset for time series analysis}.
\newblock \bibinfo{journal}{\emph{Advances in Neural Information Processing Systems}}  \bibinfo{volume}{37} (\bibinfo{year}{2024}), \bibinfo{pages}{77888--77933}.
\newblock


\bibitem[Ly{\'o}csa et~al\mbox{.}(2021)]%
        {lyocsa2021stock}
\bibfield{author}{\bibinfo{person}{{\v{S}}tefan Ly{\'o}csa}, \bibinfo{person}{Peter Moln{\'a}r}, {and} \bibinfo{person}{Tom{\'a}{\v{s}} V{\`y}rost}.} \bibinfo{year}{2021}\natexlab{}.
\newblock \showarticletitle{Stock market volatility forecasting: Do we need high-frequency data?}
\newblock \bibinfo{journal}{\emph{International Journal of Forecasting}} \bibinfo{volume}{37}, \bibinfo{number}{3} (\bibinfo{year}{2021}), \bibinfo{pages}{1092--1110}.
\newblock


\bibitem[Oord et~al\mbox{.}(2018)]%
        {oord2018representation}
\bibfield{author}{\bibinfo{person}{Aaron van~den Oord}, \bibinfo{person}{Yazhe Li}, {and} \bibinfo{person}{Oriol Vinyals}.} \bibinfo{year}{2018}\natexlab{}.
\newblock \showarticletitle{Representation learning with contrastive predictive coding}.
\newblock \bibinfo{journal}{\emph{arXiv preprint arXiv:1807.03748}} (\bibinfo{year}{2018}).
\newblock


\bibitem[Parkinson(1980)]%
        {parkinson1980extreme}
\bibfield{author}{\bibinfo{person}{Michael Parkinson}.} \bibinfo{year}{1980}\natexlab{}.
\newblock \showarticletitle{The extreme value method for estimating the variance of the rate of return}.
\newblock \bibinfo{journal}{\emph{Journal of business}} (\bibinfo{year}{1980}), \bibinfo{pages}{61--65}.
\newblock


\bibitem[Parvini and Assa(2025)]%
        {parvini2025textual}
\bibfield{author}{\bibinfo{person}{Navid Parvini} {and} \bibinfo{person}{Hirbod Assa}.} \bibinfo{year}{2025}\natexlab{}.
\newblock \showarticletitle{Textual Regression for Realized Volatility: A Model for Long-Term Forecasting}.
\newblock \bibinfo{journal}{\emph{Available at SSRN}} (\bibinfo{year}{2025}).
\newblock


\bibitem[Rahimikia et~al\mbox{.}(2021)]%
        {rahimikia2021realised}
\bibfield{author}{\bibinfo{person}{Eghbal Rahimikia}, \bibinfo{person}{Stefan Zohren}, {and} \bibinfo{person}{Ser-Huang Poon}.} \bibinfo{year}{2021}\natexlab{}.
\newblock \showarticletitle{Realised volatility forecasting: Machine learning via financial word embedding}.
\newblock \bibinfo{journal}{\emph{arXiv preprint arXiv:2108.00480}} (\bibinfo{year}{2021}).
\newblock


\bibitem[Rogers and Satchell(1991)]%
        {rogers1991estimating}
\bibfield{author}{\bibinfo{person}{L~Christopher~G Rogers} {and} \bibinfo{person}{Stephen~E Satchell}.} \bibinfo{year}{1991}\natexlab{}.
\newblock \showarticletitle{Estimating variance from high, low and closing prices}.
\newblock \bibinfo{journal}{\emph{The Annals of Applied Probability}} (\bibinfo{year}{1991}), \bibinfo{pages}{504--512}.
\newblock


\bibitem[Sarkar and Vafa(2024)]%
        {sarkar2024lookahead}
\bibfield{author}{\bibinfo{person}{Suproteem~K Sarkar} {and} \bibinfo{person}{Keyon Vafa}.} \bibinfo{year}{2024}\natexlab{}.
\newblock \showarticletitle{Lookahead bias in pretrained language models}.
\newblock \bibinfo{journal}{\emph{Available at SSRN}} (\bibinfo{year}{2024}).
\newblock


\bibitem[Tan et~al\mbox{.}(2024)]%
        {tan2024language}
\bibfield{author}{\bibinfo{person}{Mingtian Tan}, \bibinfo{person}{Mike Merrill}, \bibinfo{person}{Vinayak Gupta}, \bibinfo{person}{Tim Althoff}, {and} \bibinfo{person}{Tom Hartvigsen}.} \bibinfo{year}{2024}\natexlab{}.
\newblock \showarticletitle{Are language models actually useful for time series forecasting?}
\newblock \bibinfo{journal}{\emph{Advances in Neural Information Processing Systems}}  \bibinfo{volume}{37} (\bibinfo{year}{2024}), \bibinfo{pages}{60162--60191}.
\newblock


\bibitem[Wan et~al\mbox{.}(2021)]%
        {wan2021sentiment}
\bibfield{author}{\bibinfo{person}{Xingchen Wan}, \bibinfo{person}{Jie Yang}, \bibinfo{person}{Slavi Marinov}, \bibinfo{person}{Jan-Peter Calliess}, \bibinfo{person}{Stefan Zohren}, {and} \bibinfo{person}{Xiaowen Dong}.} \bibinfo{year}{2021}\natexlab{}.
\newblock \showarticletitle{Sentiment correlation in financial news networks and associated market movements}.
\newblock \bibinfo{journal}{\emph{Scientific reports}} \bibinfo{volume}{11}, \bibinfo{number}{1} (\bibinfo{year}{2021}), \bibinfo{pages}{3062}.
\newblock


\bibitem[Wang et~al\mbox{.}(2024)]%
        {wang2024timexer}
\bibfield{author}{\bibinfo{person}{Yuxuan Wang}, \bibinfo{person}{Haixu Wu}, \bibinfo{person}{Jiaxiang Dong}, \bibinfo{person}{Yong Liu}, \bibinfo{person}{Yunzhong Qiu}, \bibinfo{person}{Haoran Zhang}, \bibinfo{person}{Jianmin Wang}, {and} \bibinfo{person}{Mingsheng Long}.} \bibinfo{year}{2024}\natexlab{}.
\newblock \showarticletitle{Timexer: Empowering transformers for time series forecasting with exogenous variables}.
\newblock \bibinfo{journal}{\emph{Advances in Neural Information Processing Systems}} (\bibinfo{year}{2024}).
\newblock


\bibitem[Wu et~al\mbox{.}(2023)]%
        {wu2023timesnet}
\bibfield{author}{\bibinfo{person}{Haixu Wu}, \bibinfo{person}{Tengge Hu}, \bibinfo{person}{Yong Liu}, \bibinfo{person}{Hang Zhou}, \bibinfo{person}{Jianmin Wang}, {and} \bibinfo{person}{Mingsheng Long}.} \bibinfo{year}{2023}\natexlab{}.
\newblock \showarticletitle{TimesNet: Temporal 2D-Variation Modeling for General Time Series Analysis}. In \bibinfo{booktitle}{\emph{International Conference on Learning Representations}}.
\newblock


\bibitem[Wu et~al\mbox{.}(2021)]%
        {wu2021autoformer}
\bibfield{author}{\bibinfo{person}{Haixu Wu}, \bibinfo{person}{Jiehui Xu}, \bibinfo{person}{Jianmin Wang}, {and} \bibinfo{person}{Mingsheng Long}.} \bibinfo{year}{2021}\natexlab{}.
\newblock \showarticletitle{Autoformer: Decomposition transformers with auto-correlation for long-term series forecasting}.
\newblock \bibinfo{journal}{\emph{Advances in neural information processing systems}}  \bibinfo{volume}{34} (\bibinfo{year}{2021}), \bibinfo{pages}{22419--22430}.
\newblock


\bibitem[Zeng et~al\mbox{.}(2023)]%
        {zeng2023transformers}
\bibfield{author}{\bibinfo{person}{Ailing Zeng}, \bibinfo{person}{Muxi Chen}, \bibinfo{person}{Lei Zhang}, {and} \bibinfo{person}{Qiang Xu}.} \bibinfo{year}{2023}\natexlab{}.
\newblock \showarticletitle{Are transformers effective for time series forecasting?}. In \bibinfo{booktitle}{\emph{Proceedings of the AAAI conference on artificial intelligence}}, Vol.~\bibinfo{volume}{37}. \bibinfo{pages}{11121--11128}.
\newblock


\bibitem[Zhou et~al\mbox{.}(2021)]%
        {zhou2021informer}
\bibfield{author}{\bibinfo{person}{Haoyi Zhou}, \bibinfo{person}{Shanghang Zhang}, \bibinfo{person}{Jieqi Peng}, \bibinfo{person}{Shuai Zhang}, \bibinfo{person}{Jianxin Li}, \bibinfo{person}{Hui Xiong}, {and} \bibinfo{person}{Wancai Zhang}.} \bibinfo{year}{2021}\natexlab{}.
\newblock \showarticletitle{Informer: Beyond efficient transformer for long sequence time-series forecasting}. In \bibinfo{booktitle}{\emph{Proceedings of the AAAI conference on artificial intelligence}}, Vol.~\bibinfo{volume}{35}. \bibinfo{pages}{11106--11115}.
\newblock


\bibitem[Zhou et~al\mbox{.}(2023)]%
        {zhou2023one}
\bibfield{author}{\bibinfo{person}{Tian Zhou}, \bibinfo{person}{Peisong Niu}, \bibinfo{person}{Liang Sun}, \bibinfo{person}{Rong Jin}, {et~al\mbox{.}}} \bibinfo{year}{2023}\natexlab{}.
\newblock \showarticletitle{One fits all: Power general time series analysis by pretrained lm}.
\newblock \bibinfo{journal}{\emph{Advances in neural information processing systems}}  \bibinfo{volume}{36} (\bibinfo{year}{2023}), \bibinfo{pages}{43322--43355}.
\newblock


\end{thebibliography}










\end{document}